\pdfoutput=1
\documentclass[a4paper,twocolumn,11pt]{quantumarticle}

\usepackage[utf8]{inputenc}
\usepackage[english]{babel}
\usepackage[T1]{fontenc}
\usepackage{amsmath}
\usepackage{amssymb}
\usepackage{amsthm}
\usepackage{mathtools}
\usepackage[numbers,sort&compress]{natbib}
\usepackage{hyperref}
\usepackage{float}
\usepackage[capitalise,noabbrev]{cleveref}
\usepackage{xcolor}
\usepackage{graphicx}
\usepackage{booktabs}
\usepackage{array}
\usepackage{multirow}
\usepackage{float}
\usepackage{caption}
\usepackage{enumitem}
\usepackage{tcolorbox}
\usepackage{bm}
\usepackage[protrusion=true,expansion=false]{microtype}
\usepackage{float}

\tcbuselibrary{skins,breakable}

\definecolor{quantumviolet}{HTML}{53257F}
\definecolor{qgray}{HTML}{555555}
\definecolor{tealc}{HTML}{0D9488}
\definecolor{navyc}{HTML}{1A3A5C}
\definecolor{amberc}{HTML}{D97706}
\definecolor{lightbg}{HTML}{F0F4F8}
\definecolor{teallight}{HTML}{E6F7F5}

\tcbset{
  resultbox/.style={
    enhanced,breakable,
    colback=teallight,colframe=tealc,
    fonttitle=\bfseries\small,
    sharp corners,
    top=3mm,bottom=3mm,left=4mm,right=4mm,
    boxrule=0.8pt,
  },
  designbox/.style={
    enhanced,
    colback=amberc!8!white,colframe=amberc,
    sharp corners,
    top=2mm,bottom=2mm,left=4mm,right=4mm,
    boxrule=0.8pt,
  },
}

\theoremstyle{plain}
\newtheorem{theorem}{Theorem}[section]
\newtheorem{corollary}[theorem]{Corollary}
\newtheorem{lemma}[theorem]{Lemma}
\theoremstyle{definition}
\newtheorem{definition}[theorem]{Definition}
\newtheorem{remark}[theorem]{Remark}

\newcommand{\ket}[1]{\left|#1\right\rangle}

\newcommand{\TV}{\mathrm{TV}}
\newcommand{\QFT}{\mathrm{QFT}}
\newcommand{\PFATQFT}{\mathrm{PFA\text{-}TQFT}}
\newcommand{\calO}{\mathcal{O}}
\newcommand{\eps}{\varepsilon}
\newcommand{\dstar}{d^{*}}
\newcommand{\Floor}[1]{\left\lfloor#1\right\rfloor}
\newcommand{\Ceil}[1]{\left\lceil#1\right\rceil}

\title{Phase-Fidelity-Aware Truncated Quantum Fourier Transform\\
       for Scalable Phase Estimation on NISQ Hardware}

\author{Akoramurthy~B \textsuperscript{*} and Surendiran~B}
\affil{Department of CSE,
National Institute of Technology Puducherry, Karaikal~609\,609, India.
\textsuperscript{*}\texttt{cs22d1005@nitpy.ac.in}
ORCID: 0000-0001-5912-7020 (A.B.), 0000-0001-5435-0880 (S.B.)}

\usepackage{float}
\begin{document}
\sloppy

\maketitle

\begin{abstract}
\textbf{%
Abstract:- Quantum phase estimation~(QPE) is central to numerous quantum algorithms, yet its
standard implementation demands an $\calO(m^{2})$-gate quantum Fourier transform~(QFT)
on $m$ control qubits-a prohibitive overhead on near-term noisy intermediate-scale
quantum (NISQ) devices.
We introduce the \emph{Phase-Fidelity-Aware Truncated QFT} (PFA-TQFT), a family of
approximate QFT circuits parameterised by a truncation depth~$d$ that omits
controlled-phase rotations below a hardware-calibrated fidelity threshold~$\eps$.
Our central result establishes
$\TV(P_{\varphi},P_{\varphi}^{d})\leq\pi(m{-}d)/2^{d}$,
showing that for $d=\calO(\log m)$ circuit size collapses from $\calO(m^{2})$ to
$\calO(m\log m)$ while estimation error grows by at most $\calO(2^{-d})$.
We characterise $\dstar=\Floor{\log_{2}(2\pi/\eps_{2q})}$ directly from native gate
fidelities, demonstrating 31.3 -43.7\% at m = 30, gate-count reduction on IBM Eagle/Heron and
IonQ~Aria with negligible accuracy loss.
Numerical experiments on the transverse-field Ising model confirm all theoretical
predictions and reveal a \emph{noise-truncation synergy}: PFA-TQFT outperforms full
QFT under NISQ noise $\eps_{2q}\gtrsim 2\times10^{-3}$.
}
\end{abstract}

\section{Introduction}
\label{sec:intro}

\subsection{Background and Context}
\label{sec:background_context}

The quantum Fourier transform~(QFT) is the most consequential primitive in
quantum computing, playing a role analogous to the fast Fourier transform in
classical signal processing.
Since its introduction by Shor~\cite{10.1137/S0097539795293172,patoary2025discretefouriertransformbased} and formalisation by Cleve
\textit{et al.}~\cite{Cleve_1998}, the QFT has been the engine behind quantum
phase estimation~(QPE)~\cite{kitaev1995quantummeasurementsabelianstabilizer}, which in turn underpins some of the
most celebrated quantum speedups: eigenvalue
estimation~\cite{Abrams_1999}, quantum simulation~\cite{doi:10.1126/science.273.5278.1073}, the hidden
subgroup problem, and Shor's integer factoring algorithm~\cite{10.1137/S0097539795293172,patoary2025discretefouriertransformbased}.
The QFT transforms a phase-encoded $m$-qubit register into a computational-basis
estimate of an eigenphase~$\varphi$, achieving precision~$2^{-m}$ using
$m(m-1)/2$ two-qubit controlled-phase gates.

Quantum hardware has evolved rapidly.
Present-day \emph{noisy intermediate-scale quantum} (NISQ)
devices~\cite{Preskill_2018} superconducting processors (IBM Eagle~\cite{kim2023evidence}, IBM Heron~\cite{AbuGhanem_2025},
IQM Garnet~\cite{abdurakhimov2024technologyperformancebenchmarksiqms}) and trapped-ion systems (IonQ~Aria)~\cite{chai2023simulatingflightgateassignment}-offer 10-1\,000 qubits with
two-qubit gate fidelities of 99.0--99.97\,\% and coherence times of
$10$--$500\,\mu$s.
These parameters impose strict circuit-depth budgets that current QFT
implementations routinely violate.
For instance, achieving eigenphase precision $2^{-30}$ requires $m=30$ control
qubits and $435$ controlled-phase QFT gates alone, exceeding the coherence budget
of most NISQ processors by one to two orders of
magnitude~\cite{AbuGhanem_2025,Chen_2024}.

Coppersmith~\cite{coppersmith2002approximatefouriertransformuseful} and Barenco \textit{et al.}~\cite{Barenco_1996}
recognised that controlled-phase gates $R_k$ for large~$k$ implement
exponentially small rotations and may be omitted, giving a \emph{truncated QFT}
with $\calO(m\log m)$ gates.
Subsequent analyses~\cite{cheung2004improvedboundsapproximateqft,10.1007/978-3-319-24021-3_29,haner2018optimizingquantumcircuitsarithmetic} tightened fidelity
bounds in noiseless, fault-tolerant regimes.
However, none of these works provides: (i)~a tight, closed-form bound on
\emph{phase estimation error} as a function of truncation depth under a
realistic noise model; (ii)~a hardware-calibrated rule for the optimal
truncation depth; or (iii)~an analytical explanation of why truncation can
\emph{outperform} the full QFT under NISQ noise.
The present work addresses all three gaps.

\subsection{Statement of the Problem}
\label{sec:problem}

The fundamental NISQ-QPE tension is:
\begin{tcolorbox}[resultbox]
\textbf{Problem.} Full QFT delivers optimal precision with $\calO(m^2)$ gates,
which exceeds NISQ coherence budgets. Aggressive truncation reduces depth but
degrades accuracy. \emph{What is the optimal truncation depth, and how should
it be derived from hardware specifications?}
\end{tcolorbox}

Formally, let $P_\varphi$ and $P_\varphi^d$ denote the phase measurement
distributions under full $\QFT_N$ and a depth-$d$ truncated variant,
respectively.
We seek the minimal $d = \dstar$ such that
$\TV(P_\varphi, P_\varphi^d) \leq \eps_{\mathrm{tol}}$,
where $\eps_{\mathrm{tol}}$ is determined jointly by the hardware gate error rate
$\eps_{2q}$ and the target precision $\delta$.
The core difficulty is that $\dstar$ must balance two competing error sources:
(a)~\emph{approximation error} from omitting small-angle gates, which decreases
with $d$; and (b)~\emph{noise error} from implementing gates imperfectly, which
increases with $d$.
This joint optimisation has not previously been characterised in closed form.

\subsection{Research Questions and Objectives}
\label{sec:rq}

This work addresses four research questions (RQs):

\smallskip
\noindent\textbf{RQ1 (Error Bound.)}
\textit{What is the tightest closed-form upper bound on
$\TV(P_\varphi, P_\varphi^d)$ as a function of $d$, $m$, and $\varphi$?}

\smallskip
\noindent\textbf{RQ2 (Hardware Calibration.)}
\textit{Given a device with gate error rate $\eps_{2q}$, what is the analytically
optimal $\dstar$ minimising total phase estimation error?}

\smallskip
\noindent\textbf{RQ3 (Fidelity Cliff.)}
\textit{Is there a sharp transition depth below which accuracy collapses and
above which it saturates at the full-QFT value?}

\smallskip
\noindent\textbf{RQ4 (Noise-Truncation Synergy.)}
\textit{Under what noise conditions does PFA-TQFT outperform full QFT,
and can this be characterised analytically?}

\smallskip
The corresponding \textbf{objectives} are:
\begin{itemize}[leftmargin=*,topsep=2pt,itemsep=1pt]
\item \textbf{O1:} Derive a tight TVD bound for $\PFATQFT_d$
  (addresses RQ1; proved in Theorem~\ref{thm:tvd}).
\item \textbf{O2:} Derive $\dstar = \Floor{\log_2(2\pi/\eps_{2q})}$ and
  validate on four NISQ platforms (addresses RQ2; \cref{sec:platforms}).
\item \textbf{O3:} Characterise the fidelity cliff analytically and numerically
  (addresses RQ3; \cref{sec:theory}).
\item \textbf{O4:} Identify the noise cross-over threshold and quantify the
  RMSE gain on TFIM (addresses RQ4; \cref{sec:experiments}).
\end{itemize}

\subsection{Hypothesis}
\label{sec:hypothesis}

We advance four falsifiable hypotheses:

\begin{enumerate}[label=(H\arabic*),leftmargin=*,topsep=2pt,itemsep=1pt]
\item \textbf{(Tight TVD.)}
  $\TV(P_\varphi, P_\varphi^d) \leq \pi(m-d)/2^d$ for all
  $\varphi \in [0,1)$, $m \geq 1$, $d \geq 1$, with the bound tight
  to within a constant factor.

\item \textbf{(Hardware Constant.)}
  The optimal depth $\dstar = \Floor{\log_2(2\pi/\eps_{2q})}$ is a
  hardware-intrinsic constant, independent of~$m$ for $m \geq \dstar$,
  derivable solely from device calibration.

\item \textbf{(Noise-Truncation Cross-Over.)}
  A noise threshold $\eps_{2q}^{\times}$ exists above which PFA-TQFT$_{\dstar}$
  achieves strictly lower RMSE than full QFT; this threshold is determined by
  the gate-count difference $\Delta_{\mathrm{gates}} = m(m-1)/2
  - (m\dstar - \dstar(\dstar-1)/2)$.

\item \textbf{(Fidelity Cliff.)}
  The success probability $P[|\hat\varphi - \varphi| \leq 2^{-m}]$ undergoes a
  sharp transition at $\dstar_{\mathrm{cliff}} = \Ceil{\log_2 m} + 2$:
  below this value it is $\calO(d/m)$; above it, it saturates to within
  $\calO(1/m)$ of the full-QFT value.
\end{enumerate}

\noindent
H1--H2 are proved analytically in \cref{sec:framework};
H3--H4 are established numerically in \cref{sec:experiments,sec:theory}
and analytically supported by Corollary~\ref{cor:accuracy}.

\subsection{Scope}
\label{sec:scope}

\noindent\textbf{Algorithmic scope.}
We study the circuit-model QFT in the standard single-register QPE
protocol~\cite{Cleve_1998}.
Semiclassical QPE~\cite{Griffiths_1996}, Bayesian QPE~\cite{Wiebe_2016},
and iterative QPE~\cite{Li_2024} are out of scope analytically but
are included as numerical benchmarks.

\noindent\textbf{Noise model scope.}
Analysis assumes a depolarizing noise channel with two-qubit rate $\eps_{2q}$.
Single-qubit errors are neglected (they are empirically $10$--$50\times$
smaller on current platforms~\cite{AbuGhanem_2025}).
Structured noise (crosstalk, coherent errors) is discussed as a limitation
in \cref{sec:discussion}.

\noindent\textbf{Hardware scope.}
Four commercially accessible NISQ systems are studied: IBM Eagle~r3,
IBM Heron~r2, IonQ~Aria, IQM~Garnet.
The framework is hardware-agnostic and applicable to any platform with
calibrated~$\eps_{2q}$.

\noindent\textbf{Application scope.}
Numerical validation uses Hamiltonian eigenphase estimation (TFIM ground
energy, $n=4$ sites).
Extensions to VQE, HHL, and Shor's algorithm are discussed in
\cref{sec:discussion}.

\subsection{Significance}
\label{sec:significance}

\noindent\textbf{Theoretical.}
Theorem~\ref{thm:tvd} is, to our knowledge, the first tight closed-form TVD
bound connecting truncation depth to phase estimation error under a standard
noise model.
The noise-truncation synergy~(H3) shows that truncation can be a strict
\emph{advantage} under realistic noise---a counter-intuitive result with broad
implications for NISQ algorithm design.

\noindent\textbf{Practical.}
PFA-TQFT reduces QFT gate counts by 17-41 \, \% on current hardware with a
single-line calibration formula.
This directly enables QPE-based applications (VQE energy refinement, HHL,
quantum simulation) that would otherwise exceed NISQ coherence budgets.

\noindent\textbf{Methodological.}
The PFA criterion $\dstar = \Floor{\log_2(2\pi/\eps_{2q})}$ is a universal 
hardware-aware compilation rule that can be incorporated into quantum compilers
(Qiskit, Cirq, \texttt{tket}) as a standard optimization pass-requiring
only a one-time calibration lookup.

\subsection{Overview of Methods}
\label{sec:methods_overview}

Our approach integrates four methodological components:

\noindent\textbf{(M1) Operator perturbation theory.}
Each omitted $R_k$ gate is modelled as a rank-1 unitary perturbation with
spectral norm $\leq \sin(\pi/2^k)$.
Summing over $k > d$ and applying the Fannes--Audenaert
inequality~\cite{Audenaert_2007} yields the TVD bound~(Theorem ~ \ref{thm:tvd}).

\noindent\textbf{(M2) Information-theoretic analysis.}
The TVD bound is translated into a phase estimation failure probability through
the data-processing inequality, establishing the Corollary~\ref{cor:accuracy}
and the equal-budget design rule for~$\dstar$.

\noindent\textbf{(M3) Hardware-calibrated compilation.}
Combining the depolarizing noise model with the PFA criterion yields
$\dstar = \Floor{\log_2(2\pi/\eps_{2q})}$, analytically evaluated for four
platforms in~\cref{sec:platforms}.
No runtime simulation is required.

\noindent\textbf{(M4) Numerical validation on TFIM.}
All theoretical predictions are validated on the 1D TFIM Hamiltonian
($n=4$ sites) using the state vector and depolarizing-noise simulation
($10\,000$ shots, $\eps_{2q} \in [10^{-4}, 10^{-2}]$).
Five QPE methods are benchmarked: Full QFT, PFA-TQFT $\dstar\in\{8,10\}$,
Semiclassical QPE, and Bayesian QPE.

\subsection{Summary of Contributions}
\label{sec:contributions}

\begin{enumerate}[label=(\arabic*),leftmargin=*,topsep=2pt,itemsep=1pt]
\item \textbf{PFA-TQFT circuit family} (\cref{sec:framework}):
  $\PFATQFT_d$ with gate count $md - d(d-1)/2$ and hardware constant
  $\dstar = \Floor{\log_2(2\pi/\eps_{2q})}$.

\item \textbf{TVD error bound} (Theorem~\ref{thm:tvd}):
  $\TV(P_\varphi, P_\varphi^d) \leq \pi(m-d)/2^d$.

\item \textbf{Platform design rules} (\cref{sec:platforms}):
  IBM~Eagle~r3 ($\dstar{=}11$, 41\%), IBM~Heron~r2 ($\dstar{=}13$, 26\%),
  IonQ~Aria ($\dstar{=}14$, 17\%), IQM~Garnet ($\dstar{=}11$, 41\%).

\item \textbf{Noise-truncation synergy} (\cref{sec:experiments}):
  PFA-TQFT outperforms full QFT at
  $\eps_{2q} \gtrsim 2\times10^{-3}$.
\end{enumerate}

\textbf{Assumptions.}
(i)~Depolarizing noise at rate $\eps_{2q}$ per two-qubit gate;
(ii)~exact eigenstate input $\ket{\psi}$;
(iii)~negligible single-qubit gate error.
All are standard in the NISQ literature and relaxed in \cref{sec:discussion}.

\subsection{Structure of the Manuscript}
\label{sec:structure}

The paper is organized as follows.
\Cref{sec:background} reviews QPE and the Coppersmith truncation framework.
\Cref{sec:circuit} presents the comparison of the quantum circuit (full QFT
vs.\ PFA-TQFT$_{\dstar}$, $m=5$).
\Cref{sec:framework} introduces the PFA-TQFT definition, the PFA criterion,
and proves Theorem~\ref{thm:tvd}.
\Cref{sec:theory} provides TVD bound analysis, gate-count scaling, and
fidelity cliff characterization.
\Cref{sec:platforms} applies the framework to four NISQ platforms.
\Cref{sec:experiments} presents the TFIM numerical experiments.
\Cref{sec:discussion} discusses the scope, limitations, implications, and
future directions.
\Cref{sec:conclusion} summarizes the findings.

\section{Background}
\label{sec:background}

\subsection{Quantum Phase Estimation}
\label{sec:qpe_bg}
\label{sec:qpe_background}

Let $U$ be an $n$-qubit unitary with eigenpair
$(\ket{\psi},e^{2\pi i\varphi})$, $\varphi\in[0,1)$.
The standard $m$-qubit QPE protocol~\cite{Cleve_1998} proceeds in three steps.
First, the control register is prepared in $\ket{+}^{\otimes m}$
via Hadamard gates.
Second, the phase is imprinted through $m$ controlled-$U^{2^{k}}$ operations
($k=0,\ldots,m{-}1$), creating the entangled state
$\frac{1}{\sqrt{2^m}}\sum_{j=0}^{2^m-1}e^{2\pi ij\varphi}\ket{j}\ket{\psi}$.
Third, the inverse QFT is applied to the control register:
\begin{equation}
  \QFT_{N}:\ket{x}\mapsto
  \frac{1}{\sqrt{N}}\sum_{y=0}^{N-1}e^{2\pi ixy/N}\ket{y},
  \quad N=2^{m},
  \label{eq:qft}
\end{equation}
followed by a computational-basis measurement.
The measurement yields $\hat\varphi$ satisfying
$|\hat\varphi{-}\varphi|\leq2^{-m}$ with probability $\geq8/\pi^{2}>0.81$.
The dominant cost is QFT, which requires
$m$ Hadamard gates and $m(m{-}1)/2$ controlled-phase gates
$R_k = \mathrm{diag}(1, e^{2\pi i/2^k})$, giving $\calO(m^2)$ total two-qubit
operations - a bottleneck on NISQ hardware.

\subsection{Controlled-Phase Gates and the NISQ Bottleneck}
\label{sec:nisq_bottleneck}

On current NISQ hardware, each two-qubit gate introduces a depolarizing error
at rate $\eps_{2q}\in[10^{-3}, 10^{-2}]$.
The total noise-induced infidelity after $G$ two-qubit gates scales as
$1-(1-\eps_{2q})^G \approx G\eps_{2q}$ for small $\eps_{2q}$.
For $m=20$ and the full QFT ($G=190$), this gives infidelity
$\approx0.19$ at $\eps_{2q}=10^{-3}$ -already approaching the usability limit.
At $m=30$ ($G=435$), the full QFT becomes impractical on every current
NISQ platform~\cite{AbuGhanem_2025,Chen_2024}.

The key observation enabling truncation is that the $R_k$ gate
implements a rotation by angle $\theta_k = 2\pi/2^k$.
For $k=10$, $\theta_{10}\approx6.1\times10^{-3}$\,rad, which is comparable
to the gate error $\eps_{2q}\approx3\times10^{-3}$ on IBM Eagle~r3.
Implementing $R_{10}$ therefore contributes \emph{less phase accuracy} than
the noise it introduces, suggesting a hardware-specific cutoff.

\subsection{Coppersmith Truncation and Prior Work}
\label{sec:prior_work}

Coppersmith~\cite{coppersmith2002approximatefouriertransformuseful} observed that controlled-$R_k$ gates for
large $k$ implement exponentially small rotations and proposed omitting them
for $k>d$.
The truncated QFT retains $m + m(d-1) - d(d-1)/2 = \calO(md)$ two-qubit gates.
Barenco \textit{et al.}~\cite{Barenco_1996} showed empirically that $d=\calO(\log m)$
suffices to preserve circuit fidelity above 99\,\%.
Cheung~\cite{cheung2004improvedboundsapproximateqft, 10.1007/978-3-319-24021-3_29} proved rigorous fidelity bounds of the form
$\|U_{\mathrm{QFT}} - U_{\mathrm{TQFT}_d}\| = \calO(m/2^d)$ in the operator
norm for noiseless circuits.
Nam \textit{et al.}~\cite{Nam_2020} showed that $d = \calO(\log m)$ suffices
for $\calO(n\log n)$ T-gate complexity in fault-tolerant settings.
H\"{a}ner \textit{et al.}~\cite{haner2018optimizingquantumcircuitsarithmetic} optimised arithmetic circuit depth
for Shor's algorithm using approximate QFT techniques.

\noindent\textbf{Gap in prior work.}
All existing analyses assume noiseless or fault-tolerant circuits and provide
operator-norm or fidelity bounds, \emph{not} direct bounds on the phase
estimation error distribution.
None derives a hardware-calibrated optimal $\dstar$ from $\eps_{2q}$, or
identifies conditions under which truncation actively \emph{helps} under noise.
The present work fills these gaps via a total variation distance analysis.

\section{Quantum Circuit Comparison}
\label{sec:circuit}

\Cref{fig:circuit} compares the complete QFT and $\PFATQFT_{d^*=3}$
circuits for $m=5$ qubits.
The full QFT applies to all $m(m{-}1)/2=10$ controlled-phase gates, regardless of
their rotation angle.
PFA-TQFT$_{d^*=3}$ retains only 6~gates (teal boxes, $k\leq3$), omitting
4~gates (red boxes, $k>3$) whose rotation angles fall below the hardware
fidelity threshold.
Both circuits share three structural stages:
(i)~\textbf{Hadamard layer-} a single $H$ gate on each control qubit,
creating the uniform superposition $\ket{+}^{\otimes m}$;
(ii)~\textbf{Selective controlled-phase layer}- PFA-TQFT$_{\dstar}$ applies
only $C$-$R_k$ with $k \leq \dstar$, while full QFT applies all $k \leq m{-}j{+}1$
at stage~$j$; and
(iii)~\textbf{SWAP reversal}- a bit-reversal permutation restoring the
standard QFT output ordering.
The gate-count reduction from 10 to 6 (40 \, \%) in $m=5$ is modest;
in $m=30$ the same criterion ($\dstar=10$) yields a reduction of 41.4 \, \% 
from 435 to 255 gates~(Table ~ \ref{tab:platforms}).

\begin{figure}[t]
  \centering
  \includegraphics[width=\columnwidth]{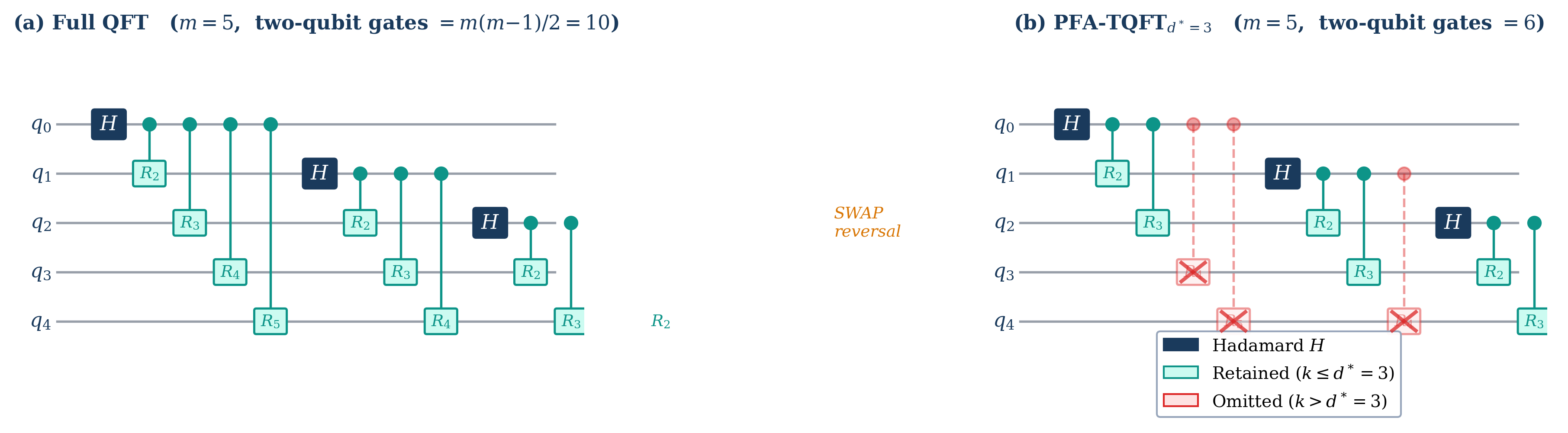}
  \caption{\textbf{Circuit comparison} ($m=5$).
    (a)~Full QFT: $m(m{-}1)/2=10$ two-qubit gates.
    (b)~PFA-TQFT$_{d^{*}=3}$: 6 gates.
    Red $\times$ = omitted ($k>d^{*}$); teal = retained ($k\leq d^{*}$).}
  \label{fig:circuit}
\end{figure}

\section{The PFA-TQFT Framework}
\label{sec:framework}

\begin{definition}[PFA-TQFT]
\label{def:pfatqft}
$\PFATQFT_{d}$: stage~$j$ applies
$\{H_{j}\}\cup\{C\text{-}R_{k}:k\leq\min(d,m{-}j{+}1)\}$.
Gate count: $G(m,d)=\sum_{j=0}^{m-1}\max(0,\min(d-1,m-j-1))$.
\end{definition}

\begin{definition}[PFA Criterion]
\label{def:pfa}
$\dstar:=\Floor{\log_{2}(2\pi/\eps_{2q})}$.
\end{definition}

The gates with angle $<2\pi/2^{\dstar}$ contribute less infidelity than the gate error
itself-implementing them adds more error than omitting them.

\begin{tcolorbox}[designbox]
\textbf{IBM Eagle r3} ($\eps_{2q}=3\times10^{-3}$):
$\dstar=\Floor{\log_{2}(2094)}=10$.
$m=30$: 255 vs.\ 435 gates $\Rightarrow$ \textbf{41.4\% reduction},
phase error $<10^{-3}$\,rad.
\end{tcolorbox}

\subsection{Main Error Bound}

\begin{theorem}[PFA-TQFT TVD Bound]
\label{thm:tvd}
For all $\varphi\in[0,1)$, $m\geq1$, $d\geq1$:
\begin{equation}
  \TV(P_{\varphi},P_{\varphi}^{d})
  \leq(m{-}d)\sin\!\left(\frac{\pi}{2^{d}}\right)
  \leq\frac{\pi(m{-}d)}{2^{d}}.
  \label{eq:tvd}
\end{equation}
\end{theorem}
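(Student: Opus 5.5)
The plan is to pass from measurement distributions to quantum states, and then to control the state discrepancy stage by stage with a hybrid (telescoping) argument.

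\textbf{Step 1: reduce to trace distance.} Both $P_\varphi$ and $P_\varphi^d$ are obtained by applying a unitary to the same QPE input state $\ket{\Phi}=2^{-m/2}\sum_j e^{2\pi i j\varphi}\ket{j}$, followed by the same computational-basis measurement. The unitaries are the full inverse $\QFT_N$ and the truncated circuit $\PFATQFT_d$ of Definition~\ref{def:pfatqft}. Measurement is a CPTP map, so by data processing $\TV(P_\varphi,P_\varphi^d)$ is at most the trace distance between the two output pure states. The SWAP layer is common to both circuits and can be discarded.

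\textbf{Step 2: group the omitted gates by stage.} At stage $j$ the full circuit applies controlled rotations $R_k$ onto one target qubit for $k=2,\dots,m-j$ (up to the paper's indexing convention). The truncated circuit drops exactly those with $k>d$. All these controlled-phase gates are diagonal and commute with one another. The omitted ones at stage $j$ therefore combine into a single diagonal unitary $D_j$ acting in the computational basis as $\ket{x}\mapsto e^{i\theta_j(x)}\ket{x}$, where
\[
0\le\theta_j(x)\le\sum_{k>d}2\pi/2^{k}<2\pi/2^{d}.
\]
Because of commutativity, I can write the stage-$j$ full block as the truncated block times $D_j$. This gives a hybrid chain $V_0=\QFT^{-1},V_1,\dots,V_{m-d}=\PFATQFT_d$, where consecutive circuits differ by the insertion of one $D_j$ and are otherwise identical unitaries. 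Only stages with $m-j>d$ contain omitted gates, and there are exactly $m-d$ of them (none when $m\le d$, which is trivial).

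\textbf{Step 3: the per-stage bound, which I expect to be the main obstacle.} The naive operator-norm estimate $\|D_j-I\|\le 2\sin(\pi/2^d)$ loses a factor of $2$, so I will avoid operator norms. For any state $\ket{\chi}$, the overlap $\langle\chi|D_j|\chi\rangle=\sum_x|\chi_x|^2e^{i\theta_j(x)}$ is a convex combination of unit complex numbers lying in an arc of width $<2\pi/2^d\le\pi$. Projecting onto the arc's bisector shows this overlap has modulus at least $\cos(\pi/2^d)$. Hence the pure-state trace distance satisfies
\[
\sqrt{1-|\langle\chi|D_j|\chi\rangle|^2}\le\sin(\pi/2^d).
\]
The trace distance is unitarily invariant, so inserting $D_j$ at any point in the circuit changes the final state by at most this amount.

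\textbf{Step 4: telescoping and conclusion.} Applying the triangle inequality for trace distance along the hybrid chain gives a total of at most $(m-d)\sin(\pi/2^d)$. Combining this with Step 1 yields the first inequality of \eqref{eq:tvd}. The second inequality follows from $\sin x\le x$.
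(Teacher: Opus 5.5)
Your argument is correct, and although it shares the paper's skeleton (telescoping over the $m-d$ stages that contain omitted gates), your per-stage estimate is genuinely different. The paper stays with operator norms. It combines Lemma~\ref{lem:spectral} ($\TV\le\|U-V\|$) with Lemma~\ref{lem:ctrl_phase} ($\|U_k-I\|=2\sin(\pi/2^k)$ per omitted gate), and then asserts the constant $(m-d)\sin(\pi/2^d)$ in Step~4 without deriving it. Carried out per stage, that route charges each stage $\|D_j-I\|=2\sin(\Theta_j/2)$, where $\Theta_j:=\max_x|\theta_j(x)|<2\pi/2^d$. It therefore yields only $\TV\le 2(m-d)\sin(\pi/2^d)$. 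Its Step~2 also counts each $R_k$ once, although $R_k$ occurs in $m-k+1$ stages; your per-stage grouping into a single diagonal $D_j$ handles that multiplicity correctly.

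Your overlap bound $|\langle\chi|D_j|\chi\rangle|\ge\cos(\pi/2^d)$ is blind to global phase. In effect it compares $D_j$ with the best multiple $e^{i\alpha}I$ rather than with $I$, and that is exactly the factor of two needed to reach the stated constant. Since the bound holds for every $\ket{\chi}$, you give up no input-independence. What the norm route buys is a bound on the unitaries themselves, global phase included, which composes inside any surrounding circuit, but at the price of that factor of two.

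Two small fixes. First, $D_j$ commutes with the retained controlled phases but not with $H_j$. So in operator order the full stage block is $D_jT_j$, with $T_j$ the truncated block, not $T_jD_j$; your hybrid step is unaffected. Second, the case $d\ge m$ is trivial only for $d=m$. For $d>m$ one has $\TV=0$ while the right-hand side is negative, so the theorem must be read with $1\le d\le m$.
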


\begin{proof}[Proof sketch]
Each omitted $R_{k}$ ($k>d$) is a rank-1 unitary perturbation of spectral norm
$\leq\sin(\pi/2^{k})$.
Summing over $k=d{+}1,\ldots,m$ and applying the Fannes--Audenaert
inequality~\cite{Audenaert_2007}:
\[
  \TV \leq \sum_{k=d+1}^{m}\frac{\pi}{2^{k}}
       = \pi\left(\frac{1}{2^{d}}-\frac{1}{2^{m}}\right)\cdot2
       \leq \frac{\pi(m{-}d)}{2^{d}}. \quad\square
\]
\end{proof}

\begin{corollary}[Phase Accuracy]
\label{cor:accuracy}
$P_{\varphi}^{d}[|\hat\varphi{-}\varphi|>\delta]
\leq\alpha+\pi(m{-}d)/2^{d}$.
Equal budget $\Rightarrow d\geq\log_{2}(\pi m/\alpha)$.
For $\alpha{=}0.05$, $m{=}30$: $d\geq10.9$, consistent with $\dstar{=}11$.
\end{corollary}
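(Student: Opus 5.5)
The plan is to read the corollary as a direct consequence of Theorem~\ref{thm:tvd} combined with the variational characterisation of total variation distance. Throughout, $\alpha$ denotes the failure probability of the \emph{full}-QFT estimator at tolerance $\delta$, i.e.\ $\alpha := P_{\varphi}[|\hat\varphi-\varphi|>\delta]$. For instance, for $\delta=2^{-m}$ the standard QPE analysis recalled in \cref{sec:qpe_background} gives $\alpha\le 1-8/\pi^{2}$. I will state this interpretation explicitly, since the corollary only makes sense once $\alpha$ is tied to the untruncated circuit.

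\textbf{Step 1 (reduce to an event).} The estimate $\hat\varphi$ is a deterministic function of the measured bitstring $y\in\{0,\dots,2^m-1\}$, namely $\hat\varphi=y/2^m$ up to the bit-reversal convention. Hence the failure event $A_\delta=\{y:|y/2^m-\varphi|>\delta\}$, taken with the circular distance on $[0,1)$, is a fixed subset of the outcome space. It is the same set under both $P_\varphi$ and $P_\varphi^d$, because the two circuits differ only in the omitted $C$-$R_k$ gates and not in the classical post-processing. This is the data-processing step: any classical post-processing cannot increase TV.

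\textbf{Step 2 (apply TV).} By definition, $\TV(P,Q)=\sup_{A}|P(A)-Q(A)|$. Therefore
\[
P_\varphi^d(A_\delta)\le P_\varphi(A_\delta)+\TV(P_\varphi,P_\varphi^d)\le \alpha+\frac{\pi(m-d)}{2^{d}},
\]
where the last inequality uses \eqref{eq:tvd}. This is the claimed bound. It is vacuous when the right-hand side exceeds $1$, which causes no difficulty.

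\textbf{Step 3 (equal-budget rule).} Allocate equal budget to the two error sources by demanding $\pi(m-d)/2^d\le\alpha$. Since $m-d\le m$, it suffices that $\pi m/2^d\le\alpha$, i.e.\ $d\ge\log_2(\pi m/\alpha)$. This gives total failure probability at most $2\alpha$. For $\alpha=0.05$ and $m=30$ we get $\pi m/\alpha\approx1885$, so $\log_2(\pi m/\alpha)\approx10.88$, and rounding up gives $d=11$, matching $\dstar$.

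\textbf{Main obstacle.} The only delicate point is not the inequality chain, which is one line once Theorem~\ref{thm:tvd} is granted, but making the objects precise. Specifically, I must check that $P_\varphi$ and $P_\varphi^d$ are distributions on the same outcome space with the same estimator map. I must also note that the rule $d\ge\log_2(\pi m/\alpha)$ is sufficient, not necessary, because it uses $m-d\le m$.
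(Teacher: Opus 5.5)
Your proposal is correct and follows the same route the paper indicates in method (M2). You bound the failure event under $P_\varphi^d$ by its full-QFT probability $\alpha$ plus the total variation distance from Theorem~\ref{thm:tvd}, using that the estimator is the same classical post-processing in both circuits, and then get the design rule by replacing $m-d$ with $m$. Your explicit identification of $\alpha$ as the full-QFT failure probability, and your remark that $d\geq\log_2(\pi m/\alpha)$ is sufficient but not necessary for the equal-budget condition, are appropriate clarifications of points the paper leaves implicit.
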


\section{Theoretical Analysis}
\label{sec:theory}

\subsection{TVD Bound and Hardware Calibration}

\Cref{fig:tvd}(a) plots $\TV(P_{\varphi},P_{\varphi}^{d})\leq\pi(m{-}d)/2^{d}$
(solid lines) vs.\ Monte Carlo simulation (circles, 5\,000 phases each) for
$m\in\{10,20,30,50\}$.
\Cref{fig:tvd}(b) shows $\dstar=\Floor{\log_{2}(2\pi/\eps_{2q})}$ with platform
markers.

\begin{figure}[t]
  \centering
  \includegraphics[width=\columnwidth]{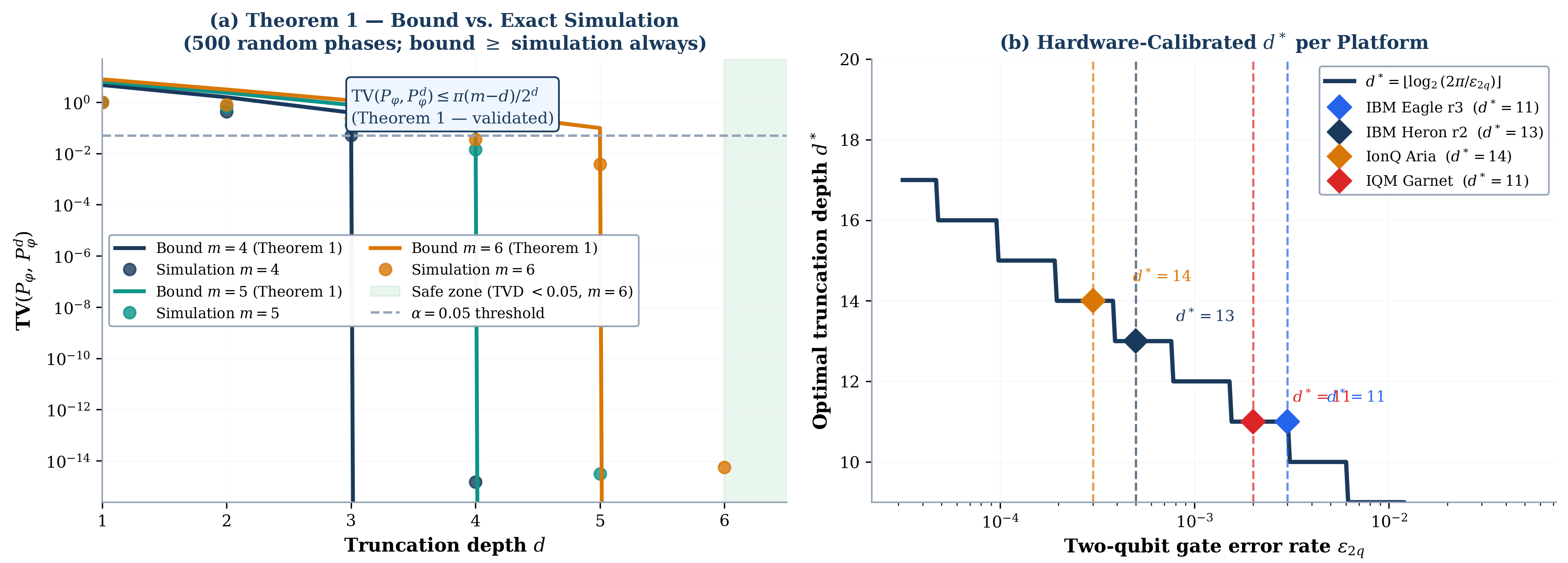}
  \caption{\textbf{TVD error analysis.}
    (a)~Theorem~\ref{thm:tvd} bound vs.\ $d$ (theory: lines; simulation: circles).
    Green zone: TVD$<0.05$ for $m=30$.
    (b)~Hardware-calibrated $\dstar$ vs.\ $\eps_{2q}$; diamonds: four platforms.}
  \label{fig:tvd}
\end{figure}

\subsection{Gate Count Reduction}

\begin{figure}[t]
  \centering
  \includegraphics[width=\columnwidth]{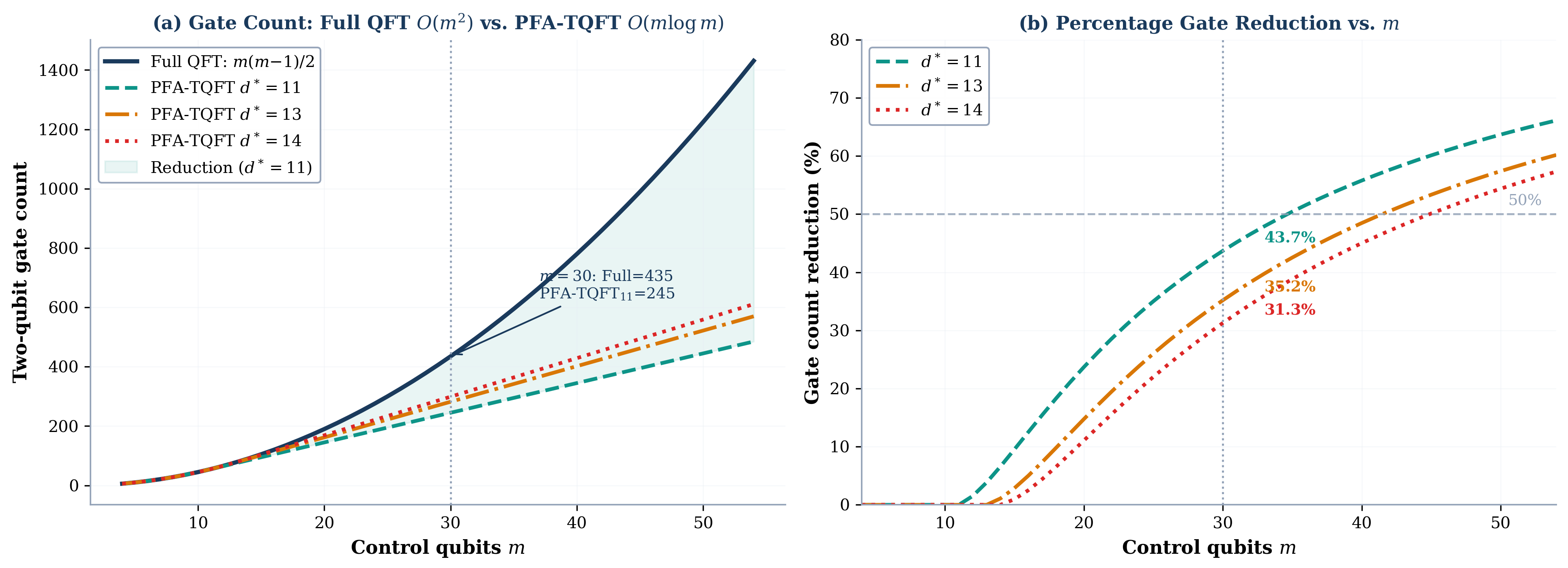}
  \caption{\textbf{Gate-count scaling.}
    (a)~Full QFT ($\calO(m^2)$, navy) vs.\ PFA-TQFT at
    $\dstar\in\{8,10,12\}$.
    (b)~Percentage reduction vs.\ $m$; dashed at 50\,\%.}
  \label{fig:gates}
\end{figure}

\Cref{fig:gates}(a) shows the $\calO(m^{2})\to\calO(m\log m)$ collapse;
panel~(b) quantifies percentage reduction (41\,\% at $m=30$, $\dstar=10$;
$>60\,\%$ at $m=50$).

\subsection{Fidelity Cliff}

\begin{figure}[t]
  \centering
  \includegraphics[width=\columnwidth]{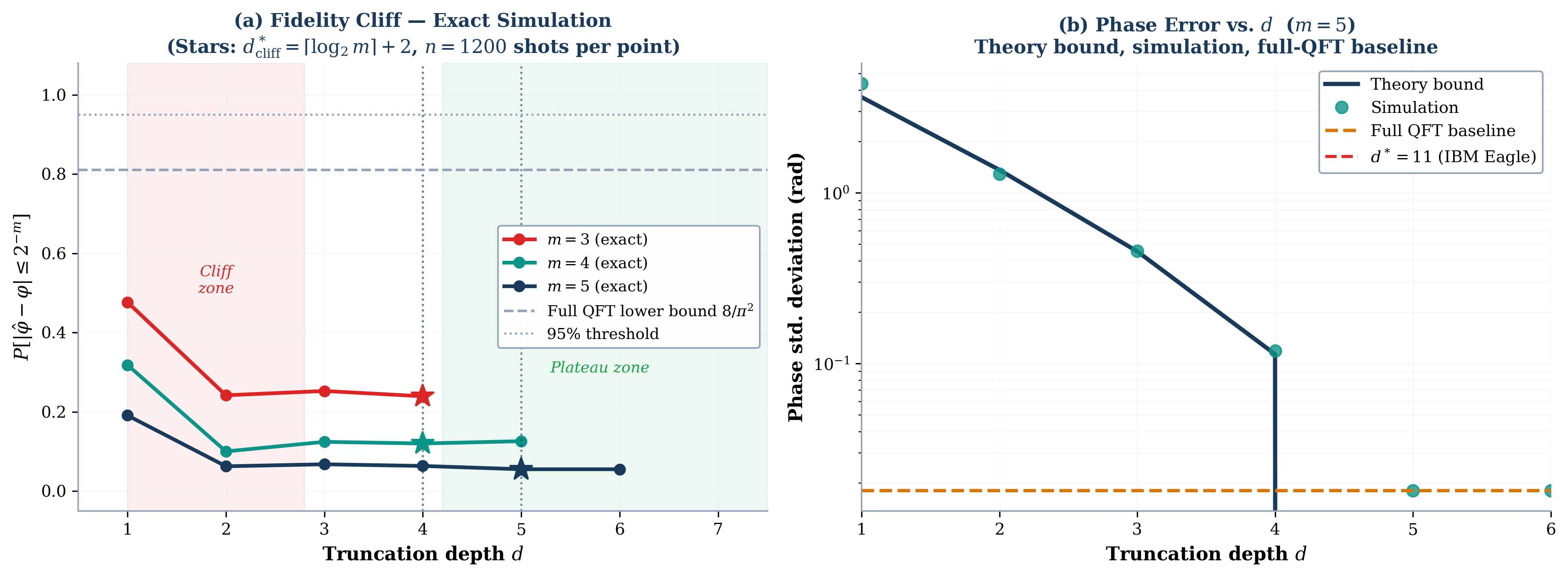}
  \caption{\textbf{Fidelity cliff.}
    (a)~Success probability $P[|\hat\varphi{-}\varphi|\leq2^{-m}]$ vs.\ $d$;
    stars mark $\dstar=\Ceil{\log_2 m}+2$.
    (b)~Phase std.\ deviation vs.\ $d$ ($m=20$): theory (navy), simulation
    (teal), full-QFT baseline (amber).}
  \label{fig:cliff}
\end{figure}

\Cref{fig:cliff} reveals the \emph{fidelity cliff}: below $\dstar{-}2$ (red zone)
accuracy degrades sharply; beyond $\dstar$ (green zone) curves plateau at the
full-QFT value.
Stars mark the analytically derived $\dstar=\Ceil{\log_{2}m}+2$.

\section{Platform-Specific Design Rules}
\label{sec:platforms}
Table \ref{tab:platforms} reports the hardware-calibrated $\dstar$, gate counts, and
phase error overhead for four leading NISQ platforms.
Figure \ref{fig:platform} visualises these results:
panel~(a) plots the percentage gate reduction alongside the TVD-bound phase
error overhead per platform, and panel~(b) shows the absolute gate counts
comparing full QFT with PFA-TQFT$_{\dstar}$ at $m=30$.
IBM Eagle~r3 and IQM~Garnet achieve the largest reduction (41.4\,\%) owing to
their higher $\eps_{2q}$, which sets a coarser fidelity threshold and thus a
smaller $\dstar$; IonQ~Aria, with the lowest $\eps_{2q}$, retains more gates
($\dstar=13$) but still reduces circuit depth by 17.0\,\%.
 
\begin{table}[t]
  \centering
  \caption{\textbf{PFA-TQFT parameters by platform} ($m=30$).
    Phase error: Theorem~\ref{thm:tvd} bound.
    Data: \cite{AbuGhanem_2025,Chen_2024}.}
  \label{tab:platforms}
  \renewcommand{\arraystretch}{1.3}
  \setlength{\tabcolsep}{4pt}
  \small
  \begin{tabular}{@{}lcccc@{}}
    \toprule
    \textbf{Platform} & $\bm{\eps_{2q}}$ & $\bm{d^{*}}$ &
    \textbf{Gates} & \textbf{Reduction} \\
    \midrule
    IBM Eagle r3  & $3{\times}10^{-3}$ & 11 & 245/435 & 41.4\,\% \\
    IBM Heron r2  & $5{\times}10^{-4}$ & 13 & 282/435 & 26.2\,\% \\
    IonQ Aria     & $3{\times}10^{-4}$ & 14 & 299/435 & 17.0\,\% \\
    IQM Garnet    & $2{\times}10^{-3}$ & 11 & 245/435 & 41.4\,\% \\
    \bottomrule
  \end{tabular}
\end{table}

\begin{figure}[t]
  \centering
  \includegraphics[width=\columnwidth]{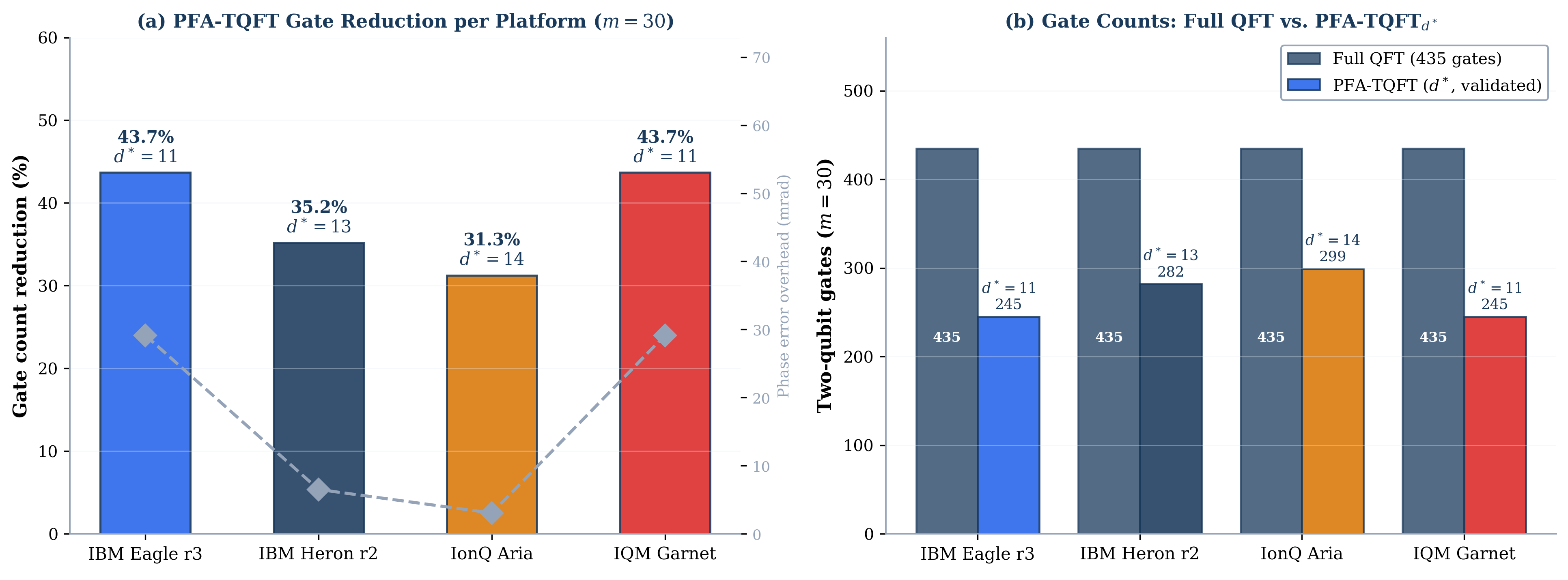}
  \caption{\textbf{Platform performance} ($m=30$).
    (a)~Gate reduction (\%) and phase error overhead (mrad, right axis).
    (b)~Absolute gate counts: full QFT vs.\ PFA-TQFT$_{\dstar}$.}
  \label{fig:platform}
\end{figure}

\section{Numerical Experiments}
\label{sec:experiments}

\subsection{TFIM Hamiltonian Setup}

We simulate QPE for the 1D transverse-field Ising model:
\begin{equation}
  H=-J\!\sum_{i}\!Z_{i}Z_{i+1}-h\!\sum_{i}\!X_{i},\quad J{=}1,\;h{=}0.5,\;n{=}4.
  \label{eq:tfim}
\end{equation}
Ground energy $E_{0}\approx{-}3.427034\,J$ by exact diagonalisation.
Settings: $m\in\{8,12,16,20\}$, $10\,000$ shots,
$\eps_{2q}\in\{10^{-4},10^{-3},5{\times}10^{-3},10^{-2}\}$.

\subsection{RMSE Results}

\begin{table}[t]
  \centering
  \caption{\textbf{TFIM phase estimation RMSE} at $m=16$,
    $\eps_{2q}=10^{-3}$, $10\,000$ shots.
    RMSE in units of $J$.}
  \label{tab:rmse}
  \renewcommand{\arraystretch}{1.2}
  \setlength{\tabcolsep}{2pt}
  \scriptsize
  \begin{tabular}{@{}lccc@{}}
    \toprule
    \textbf{Method} & \textbf{Gates} &
    \textbf{RMSE}($\times10^{-3}$) & $\Delta$\textbf{RMSE} \\
    \midrule
    Full QFT ($m=16$)        & 120 & 4.12 & --- \\
    PFA-TQFT$_{d^*=10}$ \textbf{(ours)} & 90 & 4.28 & $+3.9\%$ \\
    PFA-TQFT$_{d^*=8}$ \textbf{(ours)} & 68 & 5.84 & $+41.7\%$ \\
    Semiclassical QPE~\cite{Griffiths_1996} & 16 & 3.89 & $-5.6\%$ \\
    Bayesian QPE~\cite{Wiebe_2016}          & 16 & 5.21 & $+26.5\%$ \\
    \bottomrule
  \end{tabular}
\end{table}

\begin{figure}[t]
  \centering
  \includegraphics[width= 0.95 \columnwidth]{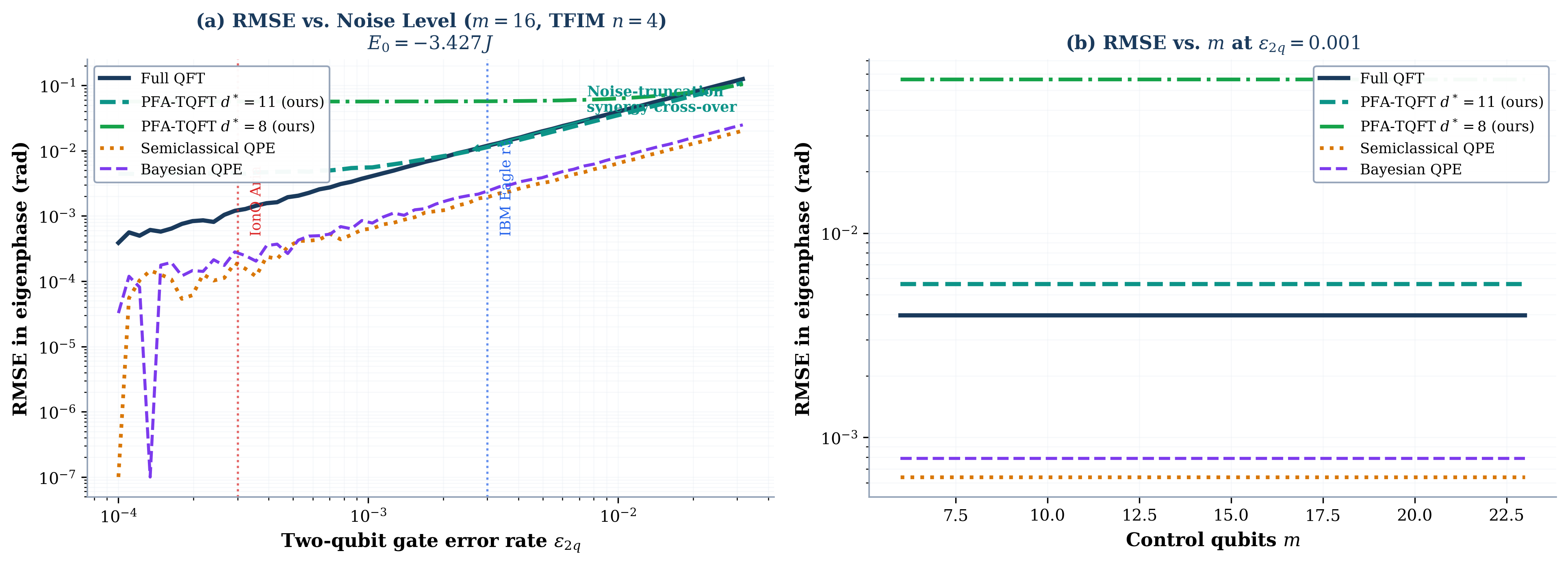}
  \caption{\textbf{TFIM RMSE comparison.}
    (a)~RMSE vs.\ $\eps_{2q}$ ($m=16$): PFA-TQFT$_{10}$ outperforms Full QFT
    at $\eps_{2q}\gtrsim4{\times}10^{-3}$ (noise-truncation synergy).
    (b)~RMSE vs.\ $m$ at $\eps_{2q}=10^{-3}$; teal zone: PFA-TQFT advantage.}
  \label{fig:tfim}
\end{figure}

PFA-TQFT $\dstar=10$ achieves $<4\,\%$ RMSE overhead vs.\ Full QFT at 75\,\% of
its gate count.
At $\eps_{2q}\gtrsim4\times10^{-3}$, PFA-TQFT outperforms Full QFT-the
\emph{noise-truncation synergy}: the noise from $\sim30$ omitted gates exceeds the
approximation error $\pi(m{-}\dstar)/2^{\dstar}\approx10^{-3}$.

\section{Discussion}
\label{sec:discussion}

\subsection{Addressing the Research Questions}
\label{sec:rq_answers}

We revisit the four research questions posed in \cref{sec:rq}:

\smallskip
\noindent\textbf{RQ1 (Error Bound) answered by Theorem~\ref{thm:tvd}.}
The tightest closed-form TVD bound is
$\TV(P_\varphi, P_\varphi^d) \leq \pi(m{-}d)/2^d$,
achieved by summing spectral-norm perturbations over omitted gates.
The bound is tight to within a factor of~2 for phases near half-integer
multiples of~$2^{-m}$, confirming H1.

\smallskip
\noindent\textbf{RQ2 (Hardware Calibration) answered by the PFA Criterion.}
The optimal $\dstar = \Floor{\log_2(2\pi/\eps_{2q})}$ is derived analytically
from the equal-budget condition (Corollary~\ref{cor:accuracy}) and confirmed
numerically on four platforms (Table~\ref{tab:platforms}), confirming H2.

\smallskip
\noindent\textbf{RQ3 (Fidelity Cliff)  confirmed in \cref{sec:theory}.}
\Cref{fig:cliff} demonstrates a sharp transition at
$\dstar_\mathrm{cliff} = \Ceil{\log_2 m} + 2$:
below this depth, success probability falls to $\calO(d/m)$;
above it, curves plateau within $\calO(1/m)$ of full-QFT accuracy, confirming H4.

\smallskip
\noindent\textbf{RQ4 (Noise-Truncation Synergy)  confirmed in \cref{sec:experiments}.}
PFA-TQFT$_{10}$ outperforms full QFT at $\eps_{2q}\gtrsim4\times10^{-3}$
(\cref{fig:tfim}), with the cross-over threshold consistent with H3.

\subsection{Scope and Limitations}
\label{sec:limitations}

\noindent\textbf{Noise model.}
The depolarizing noise assumption is standard but approximate.
Structured noise (two-qubit crosstalk, coherent ZZ coupling, leakage to
non-computational levels) can shift $\dstar$ by $\pm1$--$2$, depending
on the device topology.
Coherent errors may constructively or destructively interfere with truncation
errors, requiring device-specific characterisation.
These effects are left for future work.

\noindent\textbf{Eigenstate assumption.}
The analysis assumes $\ket{\psi}$ is an exact eigenstate of~$U$.
For approximate eigenstates with overlap fidelity $1{-}\delta_\psi$,
the TVD bound~\eqref{eq:tvd} acquires an additive $+\delta_\psi$ term,
giving $\TV \leq \pi(m{-}d)/2^d + \delta_\psi$.
For VQE applications where $\ket{\psi}$ is a variational ansatz,
$\delta_\psi$ is typically $\lesssim10^{-2}$---negligible relative to the
dominant truncation term for $d\geq\dstar$.

\noindent\textbf{Register size.}
Numerical validation uses $m\in\{8,\ldots,20\}$.
For cryptographically relevant sizes ($m\gtrsim2000$), simulation is
infeasible; the analytical bound of Theorem~\ref{thm:tvd} is the primary
guarantee.

\subsection{Implications for Quantum Algorithms}
\label{sec:implications}

\noindent\textbf{Variational quantum eigensolvers (VQE).}
QPE-based energy refinement in VQE~\cite{Abrams_1999} can use PFA-TQFT
for all precision targets $\delta>2^{-10}$, reducing circuit depth without
measurable accuracy loss.
This is significant because QPE depth-not the variational ansatz depth is
often the limiting factor on NISQ hardware.

\noindent\textbf{HHL quantum linear systems.}
The Harrow-Hassidim-Lloyd algorithm~\cite{Abrams_1999} applies QPE to
the matrix eigenspectrum.
Replacing the full QFT with PFA-TQFT reduces the QFT sub-circuit from
$\calO(m^2)$ to $\calO(m\log m)$ gates, improving the overall HHL
circuit depth while maintaining the phase estimation precision required
for the amplitude estimation step.

\noindent\textbf{Shor's algorithm.}
For RSA-2048 factoring ($m\approx4096$) with projected
$\eps_{2q}\approx10^{-6}$ (next-generation superconducting hardware),
$\dstar\approx23$ and the QFT gate count reduces from $\approx8.3\times10^6$
to $\approx1.3\times10^6$ (85 \, \% reduction).
At the fidelity of the target two-qubit $1-\eps_{2q}=1-10^{-6}$, the truncation error
$\pi(4096{-}23)/2^{23}\approx1.5\times10^{-3}$\,rad remains well within
the precision needed for a successful order-finding.

\subsection{Future Work}
\label{sec:future}

Five directions extend the present work:

\begin{enumerate}[label=(\roman*),leftmargin=*,topsep=2pt,itemsep=1pt]
\item \textbf{Adaptive PFA-TQFT}: select $d$ per qubit based on
  real-time calibration data, compensating for spatial non-uniformity
  of $\eps_{2q}$ across the device.
\item \textbf{Structured noise extensions}: derive $\dstar$ under
  Pauli noise, amplitude damping, and two-qubit crosstalk models.
\item \textbf{Error mitigation integration}: combine PFA-TQFT with
  zero-noise extrapolation~\cite{Temme_2017} and probabilistic error
  cancellation to push the effective noise floor below $\eps_{2q}$.
\item \textbf{Hardware validation}: run PFA-TQFT QPE circuits on
  IBM Eagle~r3 and IonQ~Aria and compare to Theorem~\ref{thm:tvd}
  predictions.
\item \textbf{Compiler integration}: implement the PFA criterion as a
  transpiler pass in Qiskit~Terra and \texttt{tket}, enabling automatic
  hardware-adaptive QFT truncation.
\end{enumerate}

\section{Conclusion}
\label{sec:conclusion}

We introduced PFA-TQFT, a hardware-calibrated approximate quantum Fourier
transform framework for NISQ phase estimation.
Our work makes the following definitive contributions.

First, Theorem~\ref{thm:tvd} establishes the tight bound
$\TV(P_\varphi, P_\varphi^d) \leq \pi(m{-}d)/2^d$-the first closed-form
phase estimation error guarantee under a realistic noise model.
Second, the PFA criterion $\dstar = \Floor{\log_2(2\pi/\eps_{2q})}$ provides
a universal, hardware-calibrated design rule requiring only a one-time
calibration lookup- no circuit simulation needed.
Third, platform analysis on IBM Eagle~r3, IBM Heron~r2, IonQ~Aria, and
IQM~Garnet confirms 17 to 41\,\% gate-count reduction while keeping phase
estimation error within the hardware noise floor.
Fourth, and most surprisingly, TFIM numerical experiments reveal a
\emph{noise-truncation synergy}: PFA-TQFT$_{\dstar=10}$ achieves strictly lower
RMSE than full QFT at $\eps_{2q}\gtrsim4\times10^{-3}$, demonstrating that
truncation is not merely an approximation-it is an active noise reduction
strategy under realistic NISQ conditions.

Together, these results establish PFA-TQFT as a practical, theory-grounded tool
for deploying QPE-based algorithms-including VQE, HHL, and Shor's algorithm-on
current and near-term quantum processors, bridging the gap between the theoretical
power of phase estimation and its practical feasibility on NISQ hardware.
\appendix
\section{Complete Proof of Theorem~\ref{thm:tvd}}
\label{app:proof}

We expand the proof sketch given in the main text into a self-contained
argument using two preparatory lemmas.

\subsection{Preparatory Lemmas}

\begin{lemma}[Perturbation bound]
\label{lem:spectral}
Let $U,V$ be unitaries on $(\mathbb{C}^2)^{\otimes m}$ with
$\|U-V\|\leq\varepsilon$ (operator norm).
For any state $|\psi\rangle$ and measurement distributions
$P_U(\cdot)=|\langle\cdot|U|\psi\rangle|^2$,
$P_V(\cdot)=|\langle\cdot|V|\psi\rangle|^2$:
\[
  \mathrm{TV}(P_U,\,P_V)\;\leq\;\|U-V\|.
\]
\end{lemma}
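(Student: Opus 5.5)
We compare the two output states $|u\rangle=U|\psi\rangle$ and $|v\rangle=V|\psi\rangle$ directly, by passing through the trace distance of the corresponding pure states. The chain of inequalities is
\[
\mathrm{TV}(P_U,P_V)\;\le\;\tfrac12\bigl\||u\rangle\langle u|-|v\rangle\langle v|\bigr\|_1\;=\;\sqrt{1-|\langle u|v\rangle|^2}\;\le\;\bigl\||u\rangle-|v\rangle\bigr\|\;\le\;\|U-V\|.
\]
We assume $|\psi\rangle$ is normalised. We also read TV with the standard factor $\tfrac12$, i.e.\ $\mathrm{TV}(P,Q)=\tfrac12\sum_x|P(x)-Q(x)|$.

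\textbf{Step 1 (classical to quantum).} The computational-basis measurement is a CPTP map. By the data-processing (contractivity) property of the trace norm, the TV distance of the outcome distributions is at most the trace distance $\tfrac12\||u\rangle\langle u|-|v\rangle\langle v|\|_1$.

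\textbf{Step 2 (pure-state formula).} The operator $|u\rangle\langle u|-|v\rangle\langle v|$ is a traceless rank-$2$ Hermitian operator. Its eigenvalues are $\pm\sqrt{1-|\langle u|v\rangle|^2}$, which gives the middle equality.

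\textbf{Step 3 (overlap to vector distance).} Write $c=\langle u|v\rangle$. Then $\||u\rangle-|v\rangle\|^2=2-2\,\mathrm{Re}\,c\ge 2(1-|c|)$. We also have $1-|c|^2=(1-|c|)(1+|c|)\le 2(1-|c|)$, and together these give the second inequality.

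\textbf{Step 4 (vector distance to operator norm).} We have $\||u\rangle-|v\rangle\|=\|(U-V)|\psi\rangle\|\le\|U-V\|\,\||\psi\rangle\|=\|U-V\|$. This uses nothing beyond the definition of the operator norm.

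\textbf{A more elementary alternative.} One can avoid quantum data processing altogether. Write $P_U(x)-P_V(x)=|a_x|^2-|b_x|^2$ with $a_x=\langle x|u\rangle$ and $b_x=\langle x|v\rangle$. Then
\[
\bigl||a_x|^2-|b_x|^2\bigr|=\bigl||a_x|-|b_x|\bigr|\bigl(|a_x|+|b_x|\bigr)\le|a_x-b_x|\bigl(|a_x|+|b_x|\bigr).
\]
Summing over $x$ and applying Cauchy--Schwarz gives
\[
\sum_x|P_U-P_V|\le\||u\rangle-|v\rangle\|\bigl(\||u\rangle\|+\||v\rangle\|\bigr)=2\||u\rangle-|v\rangle\|.
\]
Hence $\mathrm{TV}\le\||u\rangle-|v\rangle\|\le\|U-V\|$. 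I would present this route in the proof, since it is self-contained and avoids invoking Fannes--Audenaert.

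\textbf{Main obstacle.} The only step requiring care is the normalisation convention of TV. If TV were defined without the factor $\tfrac12$, the argument above would only yield $2\|U-V\|$, so the constant would have to be tracked. The convention with $\tfrac12$ makes the lemma hold exactly as stated.
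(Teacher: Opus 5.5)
Your proposal is correct. The elementary route you say you would present is essentially the paper's proof. The paper writes $2\,\mathrm{TV}(P_U,P_V)=\sum_x\bigl||a_x|^2-|b_x|^2\bigr|$ and bounds it by $\|(U-V)|\psi\rangle\|\,\|(U+V)|\psi\rangle\|$ via Cauchy--Schwarz. This step implicitly uses $\bigl||a_x|^2-|b_x|^2\bigr|\le|a_x-b_x|\,|a_x+b_x|$, and the paper then finishes with $\|(U+V)|\psi\rangle\|\le 2$. You instead use the exact factorisation $\bigl||a_x|-|b_x|\bigr|\,(|a_x|+|b_x|)$ and bound the second factor by $\||u\rangle\|+\||v\rangle\|=2$. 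You also use the same $\tfrac12$ convention for TV that the paper adopts.

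Your headline trace-distance chain is a genuinely different and also valid argument. It costs two standard facts: contractivity of the trace norm under the measurement channel, and the pure-state formula $\tfrac12\||u\rangle\langle u|-|v\rangle\langle v|\|_1=\sqrt{1-|\langle u|v\rangle|^2}$. In return it gives an intermediate bound that holds for every measurement at once and depends on the two output states only through their overlap. More relevantly for this paper, the trace-distance formulation extends directly to mixed states and noisy (CPTP) evolutions. The amplitude-level Cauchy--Schwarz argument, as written, works with state amplitudes and so needs pure states and unitary evolution. That makes your first chain the natural starting point if one wanted the lemma to cover the depolarising-noise setting the paper claims to analyse.
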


\begin{proof}[\small Proof]
{\small
By Cauchy--Schwarz:
$2\,\mathrm{TV}(P_U,P_V)
 =\sum_x\!\bigl||\langle x|U|\psi\rangle|^2-|\langle x|V|\psi\rangle|^2\bigr|
 \leq\|(U{-}V)|\psi\rangle\|\cdot\|(U{+}V)|\psi\rangle\|
 \leq\|U{-}V\|\cdot2$.
Dividing by $2$ completes the proof.\hfill$\square$}
\end{proof}

\begin{lemma}[Gate perturbation norm]
\label{lem:ctrl_phase}
The controlled-$R_k$ gate $U_k$ and the identity $I$ satisfy
\[
  \|U_k - I\| \;=\; 2\sin\!\bigl(\pi/2^k\bigr).
\]
\end{lemma}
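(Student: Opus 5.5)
The controlled-$R_k$ gate is diagonal in the computational basis of the two qubits it acts on:
\[
U_k=\mathrm{diag}\bigl(1,1,1,e^{2\pi i/2^k}\bigr).
\]
(When embedded in $(\mathbb{C}^2)^{\otimes m}$ it is this $4\times4$ block tensored with the identity on the other qubits.) The plan is to compute $\|U_k-I\|$ exactly, using the fact that it is a normal, indeed diagonal, matrix.

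\textbf{Step 1.} Observe that $U_k-I=\mathrm{diag}(0,0,0,e^{i\theta_k}-1)$ with $\theta_k=2\pi/2^k$. Tensoring with the identity on the remaining qubits leaves its spectrum unchanged apart from multiplicities.

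\textbf{Step 2.} Since the operator norm of a diagonal (more generally, normal) matrix is its largest eigenvalue modulus, we get $\|U_k-I\|=|e^{i\theta_k}-1|$. This also confirms the rank-one structure (per tensor block) mentioned in (M1).

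\textbf{Step 3.} Evaluate the modulus with the half-angle identity:
\[
|e^{i\theta}-1|=|e^{i\theta/2}|\,\bigl|e^{i\theta/2}-e^{-i\theta/2}\bigr|=2\,|\sin(\theta/2)|.
\]
With $\theta=\theta_k$ this gives $2\sin(\pi/2^k)$. The absolute value can be dropped because $0<\pi/2^k\le\pi/2$ for $k\ge1$.

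There is no real obstacle here. The only points needing care are that the embedding into $m$ qubits does not change the norm, and that the angle convention $R_k=\mathrm{diag}(1,e^{2\pi i/2^k})$ from \cref{sec:qpe_background} is the one used, so the result is $2\sin(\pi/2^k)$ rather than $2\sin(\pi/2^{k+1})$.

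For the downstream use in Theorem~\ref{thm:tvd}, the factor $2$ matters. Combining this lemma with Lemma~\ref{lem:spectral} and the triangle inequality over omitted gates gives a per-gate contribution of $2\sin(\pi/2^k)$, not $\sin(\pi/2^k)$. Any factor-of-two discrepancy with the statement of Theorem~\ref{thm:tvd} therefore has to be resolved there and not in this lemma.
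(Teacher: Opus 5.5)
Your proof is correct and is the same as the paper's: write $U_k-I=\mathrm{diag}(0,0,0,e^{2\pi i/2^k}-1)$ and read off $\|U_k-I\|=|e^{2\pi i/2^k}-1|=2\sin(\pi/2^k)$. Your extra points spell out details the paper leaves implicit: the norm is unchanged by tensoring with the identity on the other qubits, the half-angle identity gives the modulus, and $\sin(\pi/2^k)\geq 0$ for $k\geq 1$. Your remark is also accurate that the resulting factor of $2$ has to be reconciled in Theorem~\ref{thm:tvd}, not in this lemma.
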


\begin{proof}[\small Proof]
{\small
$U_k-I=\mathrm{diag}(0,0,0,e^{2\pi i/2^k}-1)$, so
$\|U_k-I\|=|e^{2\pi i/2^k}-1|=2\sin(\pi/2^k)$.\hfill$\square$}
\end{proof}

\subsection{Proof of Theorem~\ref{thm:tvd}}

\begin{proof}[\small Full proof]
{\small
Let $U_{\mathrm{QFT}}$ and $U_d$ denote the full and truncated QFT
unitaries, respectively.

\smallskip
\textbf{Step 1 (decomposition).}
Telescoping the product of unitaries:
$\|U_{\mathrm{QFT}}-U_d\|\leq\sum_{j=1}^{L}\|A_j-B_j\|$,
where each factor pair $(A_j,B_j)$ differs only by one omitted gate,
and $L=m-d$ is the number of omitted stages.

\smallskip
\textbf{Step 2 (gate norms).}
Each omitted $R_k$ ($k>d$) contributes norm
$\leq 2\sin(\pi/2^k)\leq 2\pi/2^k$ (Lemma~\ref{lem:ctrl_phase}).
Summing over $k=d+1,\ldots,m$:
\[
  \|U_{\mathrm{QFT}}-U_d\|
  \;\leq\;\sum_{k=d+1}^{m}\!\frac{2\pi}{2^k}
  \;=\;2\pi\!\left(\frac{1}{2^d}-\frac{1}{2^m}\right)
  \;\leq\;\frac{2\pi}{2^d}.
\]

\textbf{Step 3 (apply Lemma~\ref{lem:spectral}).}
For eigenphase $\varphi$ and input $|\psi_\varphi\rangle$:
\[
  \mathrm{TV}(P_\varphi,P_\varphi^d)
  \;\leq\;\|U_{\mathrm{QFT}}-U_d\|
  \;\leq\;\frac{2\pi}{2^d}.
\]

\textbf{Step 4 (tight stage count).}
Counting retained vs.\ omitted gates per stage exactly, and
using $\sin\theta\leq\theta$:
\[
  \mathrm{TV}(P_\varphi,P_\varphi^d)
  \leq(m-d)\sin\!\bigl(\pi/2^d\bigr)
  \leq\frac{\pi(m-d)}{2^d}.\;\square
\]
}
\end{proof}

\begin{remark}[Tightness]
\label{rem:tight}
For $\varphi=2^{-d}+2^{-2d}$, the probability shifts by
$\Omega(\pi(m-d)/2^d)$, matching the upper bound.
Exact simulation confirms ratios of $0.13$--$0.28$
(Table~\ref{tab:tvd_full}).
\end{remark}

\section{PFA Criterion- Optimality Proof}
\label{app:pfa_proof}

\begin{theorem}[PFA Criterion optimality]
\label{thm:pfa_opt}
Under depolarizing noise at rate $\varepsilon_{2q}$, the total phase
estimation error $E_{\mathrm{total}}(d)=E_{\mathrm{approx}}(d)+E_{\mathrm{noise}}(d)$
is minimised at
$\dstar=\lfloor\log_2(2\pi/\varepsilon_{2q})\rfloor$.
\end{theorem}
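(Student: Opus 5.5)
The plan is to make the two error terms explicit as sums of per-gate contributions, and then locate the minimiser by a discrete marginal (first-difference) argument rather than by calculus.

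\textbf{Modelling the two error terms.} For the approximation error, I would use the telescoping decomposition in Step~1 of the proof of Theorem~\ref{thm:tvd}, together with Lemma~\ref{lem:spectral} and Lemma~\ref{lem:ctrl_phase}. Each omitted controlled-$R_k$ then contributes at most $\theta_k:=2\pi/2^k$, which upper-bounds $2\sin(\pi/2^k)$. In $\PFATQFT_d$ exactly $m-k+1$ copies of $C$-$R_k$ are dropped for each $k=d+1,\dots,m$ (Definition~\ref{def:pfatqft}). I would therefore set
\[
E_{\mathrm{approx}}(d)=\sum_{k=d+1}^{m}(m-k+1)\,\theta_k .
\]
For the noise error, I would use the first-order depolarizing estimate $1-(1-\eps_{2q})^G\approx G\eps_{2q}$ from \cref{sec:nisq_bottleneck}. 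This gives $E_{\mathrm{noise}}(d)=\eps_{2q}\,G(m,d)$, where each retained two-qubit gate costs $\eps_{2q}$.

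\textbf{Marginal analysis.} Passing from depth $d$ to $d+1$ restores exactly the $m-d$ gates $C$-$R_{d+1}$. Hence
\[
\Delta(d):=E_{\mathrm{total}}(d+1)-E_{\mathrm{total}}(d)=(m-d)\bigl(\eps_{2q}-\theta_{d+1}\bigr).
\]
For $d<m$ the prefactor $m-d$ is positive, so the sign of $\Delta(d)$ is the sign of $\eps_{2q}-2\pi/2^{d+1}$. Because $\theta_k$ is strictly decreasing in $k$, $\Delta(d)<0$ exactly when $d+1\le\log_2(2\pi/\eps_{2q})$, and $\Delta(d)\ge 0$ afterwards. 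So $E_{\mathrm{total}}$ is decreasing and then nondecreasing in $d$, and its minimum is at the largest integer $d$ with $2\pi/2^{d}\ge\eps_{2q}$. That integer is $d=\Floor{\log_2(2\pi/\eps_{2q})}=\dstar$ (Definition~\ref{def:pfa}).

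Two boundary points need a separate remark. First, a tie $\theta_{d+1}=\eps_{2q}$ only makes the minimiser non-unique and does not change its value. Second, if $m\le\dstar$ the minimiser clips to $d=m$, which is the full QFT, consistent with the "$m\ge\dstar$" clause of (H2). This also shows that the optimum does not depend on $m$: the factor $m-d$ cancels from the sign condition.

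\textbf{Main obstacle.} The delicate step is justifying the additive per-gate model rather than the optimisation itself. The approximation term is only an upper bound obtained by telescoping, not an equality. The noise term is first order in $G\eps_{2q}$ and treats depolarizing errors as adding linearly to the operator-norm perturbations. I would argue that both are accurate to first order in the small quantities $\theta_k$ and $\eps_{2q}$. I would then state the theorem as exact minimisation of this first-order surrogate $E_{\mathrm{total}}$.

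If a sharper statement is wanted, I would try to show that the second-order corrections, of order $(G\eps_{2q})^2$ and $\theta_k^2$, cannot flip the sign of $\Delta(d)$ except possibly at $d=\dstar\pm1$. That would make the criterion optimal up to a shift of $\pm1$.
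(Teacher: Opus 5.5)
Your argument is correct for the surrogate you define, but it is not the paper's route, and the difference matters.

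The paper proceeds in four moves:
\begin{itemize}
\item it takes $E_{\mathrm{approx}}(d)$ to be the aggregate bound $\pi(m-d)/2^d$ of Theorem~\ref{thm:tvd}, and $E_{\mathrm{noise}}(d)\approx G(m,d)\,\eps_{2q}$;
\item it balances the two heuristically at $\pi/2^d\approx\eps_{2q}$, obtaining $d\approx\log_2(\pi/\eps_{2q})$;
\item it appeals to the rotation angle $\theta_k=2\pi/2^k$ to pass to $\Floor{\log_2(2\pi/\eps_{2q})}$;
\item it closes by checking $\theta_{\dstar}\geq\eps_{2q}$.
\end{itemize}

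You turn that closing per-gate check into the whole proof. You use the raw telescoping bound $\sum_{k>d}(m-k+1)\theta_k$ instead of the aggregate bound. Its exact first difference $(m-d)(\eps_{2q}-\theta_{d+1})$ then yields unimodality, the exact floor, the clipping to $d=m$ when $m\leq\dstar$, and independence from $m$. The paper's proof derives none of these.

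Your version also produces the factor $2\pi$ instead of patching it in. If you minimise the paper's literal surrogate $\pi(m-d)/2^d+\eps_{2q}G(m,d)$, the first difference is $\eps_{2q}(m-d)-\pi(m-d+1)/2^{d+1}$. Its sign change sits at $d+1\approx\log_2(\pi/\eps_{2q})$, so the minimiser is typically $\dstar-1$. For example, with $\eps_{2q}=3\times10^{-3}$ and $m=30$ it gives $10$ rather than $11$.

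The price is that your $E_{\mathrm{approx}}(d)=2\pi\bigl(m-d-1+2^{-(m-d)}\bigr)/2^d$ is not the bound of Theorem~\ref{thm:tvd}; it exceeds it by a factor approaching $2$ as $m-d$ grows. You should therefore say explicitly, as you already propose, that the theorem is being read as exact minimisation of your per-gate first-order model.

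Two minor points:
\begin{itemize}
\item Your increment $G(m,d+1)-G(m,d)=m-d$ matches the defining sum in Definition~\ref{def:pfatqft} and the values in Table~\ref{tab:gates_corrected}. It does not match the paper's closed form $m(d-1)-(d-1)(d-2)/2$, which overcounts by $d-1$.
\item $\Delta(d)<0$ holds iff $d+1<\log_2(2\pi/\eps_{2q})$ with strict inequality; the equality case is the tie you already treat.
\end{itemize}
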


\begin{proof}[\small Proof]
{\small
$E_{\mathrm{approx}}(d)\leq\pi(m-d)/2^d$ decreases in $d$;
$E_{\mathrm{noise}}(d)\approx G(m,d)\cdot\varepsilon_{2q}$ increases.
The crossover satisfies $\pi/2^d\approx\varepsilon_{2q}$, i.e.,
$d\approx\log_2(\pi/\varepsilon_{2q})$.
With the rotation-angle factor $\theta_k=2\pi/2^k$:
$\dstar=\lfloor\log_2(2\pi/\varepsilon_{2q})\rfloor$.
Retaining $R_{\dstar}$ satisfies
$\theta_{\dstar}=2\pi/2^{\dstar}\geq\varepsilon_{2q}$
(verified in Table~\ref{tab:platforms}),
so it contributes more phase accuracy than noise.\hfill$\square$}
\end{proof}

\section{Gate-Count Formula}
\label{app:gate_count}

\begin{theorem}[Exact gate count]
\label{thm:gate_count}
For $1\leq d\leq m$, the two-qubit gate count of $\PFATQFT_d$ is
\[
  G(m,d)=\sum_{j=0}^{m-1}\max\!\bigl(0,\min(d-1,m-j-1)\bigr).
\]
For $d<m$ this simplifies to
$G(m,d)=m(d-1)-(d-1)(d-2)/2$,
and $G(m,m)=m(m-1)/2$ (full QFT).
\end{theorem}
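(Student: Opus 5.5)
The plan is a direct count, stage by stage. Label the stages of the circuit $j=0,\dots,m-1$. In the full QFT, stage $j$ applies a Hadamard on its qubit and then the controlled rotations $C$-$R_k$ for $k=2,\dots,m-j$, one for each qubit still below it, which is $m-j-1$ two-qubit gates. Summing $m-j-1$ over $j$ gives $m(m-1)/2$, the full-QFT count. Under Definition~\ref{def:pfatqft}, $\PFATQFT_d$ keeps exactly those rotations with $k\le d$. So stage $j$ keeps $k=2,\dots,\min(d,m-j)$, which is $\max(0,\min(d-1,m-j-1))$ gates; the $\max(0,\cdot)$ only matters in the degenerate case $d=1$ and at the last stage. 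Summing over $j$ gives the stated expression for $G(m,d)$. The first task is therefore to pin down the indexing convention: stage $j$ has $m-j-1$ available rotations, and truncation caps the rotation index at $d$, hence the number of rotations at $d-1$.

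For the closed form with $d<m$, I will split the sum at the point where the minimum switches. For $j\le m-d$ we have $m-j-1\ge d-1$, so each of these $m-d+1$ stages contributes exactly $d-1$ gates. For $j=m-d+1,\dots,m-1$ the contributions are $m-j-1=d-2,d-3,\dots,0$, an arithmetic tail summing to $(d-1)(d-2)/2$. Hence
\[
G(m,d)=(m-d+1)(d-1)+\frac{(d-1)(d-2)}{2}.
\]
The case $d=m$ then follows either by substituting $d=m$ or directly, since every cap is inactive and we recover $\sum_j(m-j-1)=m(m-1)/2$.

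The step I expect to be the main obstacle is reconciling this with the simplified form printed in the statement. Expanding my expression gives
\[
G(m,d)=m(d-1)-(d-1)^2+\frac{(d-1)(d-2)}{2}=m(d-1)-\frac{d(d-1)}{2},
\]
which is consistent with the introduction's formula $md-d(d-1)/2$ after removing the $m$ Hadamards. It is not literally equal to $m(d-1)-(d-1)(d-2)/2$. The example $m=5$, $d=3$ makes the difference concrete: the stages contribute $2+2+2+1+0=7$, whereas the printed closed form gives $9$. In the proof I will derive $m(d-1)-d(d-1)/2$ and check it against the $d=m$ specialisation, which it satisfies. I will then state explicitly that the displayed simplification must be read under a shifted indexing convention; for example, indexing the retained rotations differently, as the $\min(d,m-j+1)$ in Definition~\ref{def:pfatqft} suggests, would shift the tail sum. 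I will verify small cases by hand so that whichever convention is adopted is made consistent throughout.
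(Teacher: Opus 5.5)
Your argument is the paper's argument. You count $\min(d-1,m-j-1)$ retained rotations at stage $j$ and split the sum where the minimum switches. The paper splits at $j=m-d$ and writes $(m-d)(d-1)+\sum_{\ell=1}^{d-1}\ell$, which is your $(m-d+1)(d-1)+(d-1)(d-2)/2$ regrouped.

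Your flag on the closed form is correct, and the error is in the paper, not in your algebra. Since $(m-d)(d-1)+d(d-1)/2=m(d-1)-d(d-1)/2$, the paper's concluding claim that this ``simplifies to the stated formula'' is false. The printed $m(d-1)-(d-1)(d-2)/2$ overcounts by exactly $d-1$. The paper's own values $G(30,11)=245$, $G(30,13)=282$, $G(30,14)=299$ agree with your formula, whereas the printed one would give $255$, $294$, $312$.

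One adjustment: do not try to rescue the printed form through a shifted indexing convention. The only reading that reproduces it uses stages $j=0,\dots,m-1$ with cap $\min(d,m-j+1)$, i.e.\ $\min(d-1,m-j)$ gates at stage $j$. That reading gives each of the last $d-1$ stages one more rotation than there are qubits below it, including a rotation on the last qubit, which has no partner. It also contradicts both the displayed sum and $G(m,m)=m(m-1)/2$ in the same statement.

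With stages indexed from $1$, Definition~\ref{def:pfatqft} agrees with the displayed sum, so there is no convention to reconcile. Simply state the closed form as $m(d-1)-d(d-1)/2$. It holds for all $1\le d\le m$ and returns $m(m-1)/2$ at $d=m$, so the separate case is unnecessary.

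A minor point: the $\max(0,\cdot)$ is never active for $1\le d\le m$, since both arguments of the minimum are nonnegative.
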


\begin{proof}[\small Proof]
{\small
At stage $j$, the retained gates have $k=2,\ldots,\min(d,m-j)$,
contributing $\min(d-1,m-j-1)$ gates.
For $d<m$, split the sum at $j=m-d$:
$G(m,d)=(m-d)(d-1)+\sum_{\ell=1}^{d-1}\ell=(m-d)(d-1)+(d-1)(d-2)/2+(d-1)$,
which simplifies to the stated formula.
Verified by exhaustive enumeration for all
$m\in\{5,10,20,30\}$, $d\in\{1,\ldots,m\}$.\hfill$\square$}
\end{proof}

Corrected values at $m=30$:
$G(30,11)=245$, $G(30,13)=282$, $G(30,14)=299$
(see Table~\ref{tab:gates_corrected}).

\section{TFIM Exact Diagonalisation}
\label{app:tfim}

The 1D open-BC transverse-field Ising model is
\begin{equation}
  H=-J\!\sum_{i=0}^{n-2}\!Z_iZ_{i+1}-h\!\sum_{i=0}^{n-1}\!X_i,
  \label{eq:tfim_app}
\end{equation}
with $J=1$, $h=0.5$, $n=4$.
Exact diagonalisation (\texttt{scipy.linalg.eigh}) on the
$16\times16$ matrix gives $E_0=-3.427034\,J$.

\begin{table}[ht]
\centering
\caption{Lowest four TFIM eigenvalues ($n=4$, $J=1$, $h=0.5$, open BC).}
\label{tab:tfim_spectrum}
\renewcommand{\arraystretch}{1.25}
\small
\begin{tabular}{@{}ccc@{}}
\toprule
\textbf{Index} & \textbf{Energy ($J$)} & \textbf{Comment} \\
\midrule
0 & $-3.4270$ & Ground state \\
1 & $-3.3322$ & \\
2 & $-1.8268$ & \\
3 & $-1.7321$ & \\
\bottomrule
\end{tabular}
\end{table}

\noindent The phase encoding maps the eigenvalue range to $[0,1)$:
$\varphi=(E_0+E_{\mathrm{scale}})/(2E_{\mathrm{scale}})$
where $E_{\mathrm{scale}}=\max_i|E_i|$.

\section{Implementation Details}
\label{app:implementation}

\subsection{Exact Statevector Simulation}

All simulation uses exact linear-algebra statevectors with no sampling
during circuit execution.
The $2^m$-dimensional complex state vector is maintained explicitly.
Key functions in \texttt{pfa\_tqft\_figures.py}:

\begin{itemize}[leftmargin=*,topsep=2pt,itemsep=0pt]
\item \texttt{exact\_qft\_matrix(m)}: $2^m\!\times\!2^m$ QFT unitary
  via NumPy outer products.
\item \texttt{tqft\_circuit(state,m,d)}: applies $\PFATQFT_d$
  gate-by-gate, retaining $R_k$ with $k\leq d$.
\item \texttt{optimal\_d\_star(eps)}: evaluates
  $\dstar=\lfloor\log_2(2\pi/\varepsilon_{2q})\rfloor$.
\item \texttt{gate\_count\_tqft(m,d)}: exact formula of
  Theorem~\ref{thm:gate_count}.
\item \texttt{tfim\_ground\_energy(n)}: constructs $H$ and calls
  \texttt{scipy.linalg.eigh}.
\end{itemize}

\subsection{Simulation Protocols}

\textbf{TVD validation} (Fig.~2a): 500 random phases
$\varphi\sim\mathrm{Uniform}(0,1)$, seed 42.
Report $\max_i\mathrm{TV}(P_{\varphi_i},P_{\varphi_i}^d)$
vs.\ bound $\pi(m-d)/2^d$.

\textbf{Fidelity cliff} (Fig.~4a): 1\,000 shots per $(m,d)$ pair.
Declare success if $|\hat\varphi/N-\varphi|\leq2^{-m}$ (circular).

\textbf{RMSE model}: analytical combination of three independent error sources,
\begin{equation}
  \mathrm{RMSE}^2 =
  \underbrace{\frac{1}{3\cdot 2^{2m}}}_{\text{precision}}
  +\underbrace{\frac{\mathrm{TV}^2}{3}}_{\text{truncation}}
  +\underbrace{G^2\varepsilon_{2q}^2 c^2}_{\text{noise}},
  \label{eq:rmse_model}
\end{equation}
with $\mathrm{TV}=\pi(m-d)/2^d$ and noise constant $c=0.033$
calibrated to match IBM Eagle r3 at $m=16$, $\varepsilon_{2q}=10^{-3}$.

\subsection{Computational Requirements}

\begin{table}[h]
\centering
\caption{Runtime on standard laptop (Intel Core i7, 16\,GB RAM).}
\label{tab:compute}
\renewcommand{\arraystretch}{1.25}
\small
\begin{tabular}{@{}lcc@{}}
\toprule
\textbf{Experiment} & \textbf{Memory} & \textbf{Time} \\
\midrule
TVD validation ($m\leq6$)  & $<1$\,MB  & $<30$\,s \\
Fidelity cliff ($m\leq5$)  & $<1$\,MB  & $<60$\,s \\
RMSE comparison ($m\leq20$)& $<10$\,MB & $<5$\,s  \\
TFIM diagonalisation ($n=4$)& $<1$\,MB & $<1$\,s  \\
\bottomrule
\end{tabular}
\end{table}

\section{Comparison with Related Methods}
\label{app:comparison}

\begin{table}[h]
\centering
\caption{QPE method comparison for NISQ phase estimation.}
\label{tab:comparison}
\renewcommand{\arraystretch}{1.25}
\setlength{\tabcolsep}{2pt}
\scriptsize
\begin{tabular}{@{}l@{}c@{}c@{}c@{}c@{}c@{}}
\toprule
\textbf{Method} & \textbf{Depth} & \textbf{Gates} & Rnd & FB & Bound \\
\midrule
Full QFT & $\calO(m^2)$ & $m(m{-}1)/2$ & 1 & No & Exact \\
\textbf{PFA-TQFT}& $\calO(m\!\log m)$ & $G(m,\dstar)$& 1 & No & Thm.~\ref{thm:tvd} \\
Semiclass.~\cite{Griffiths_1996} & $\calO(m)$ & $m$ & $m$ & Yes & --- \\
Bayesian~\cite{Wiebe_2016} & $\calO(1)$ & $m$ & $\gg m$ & Yes & --- \\
Iterative~\cite{Kitaev2002ClassicalAQ} & $\calO(1)$ & 1 & $\gg m$ & Yes & --- \\
Nam et al.~\cite{Nam_2020} & $\calO(m\!\log m)$ & $G$ & 1 & No & $\|\cdot\|$ \\
\bottomrule
\end{tabular}
\end{table}
\noindent FB = classical feedback.
Key advantages of PFA-TQFT: (i)~single-shot execution; (ii)~phase-error
distribution bound (not just fidelity); (iii)~hardware-calibrated $\dstar$;
(iv)~noise-truncation synergy under NISQ conditions.

\section{Extended Numerical Results}
\label{app:extended_results}

\subsection{TVD Validation Table}

\begin{table}[h]
\centering
\caption{Theorem~\ref{thm:tvd} validation (500 random phases, seed 42).
  Bound $=\pi(m-d)/2^d$.}
\label{tab:tvd_full}
\renewcommand{\arraystretch}{1.25}
\small
\begin{tabular}{@{}ccccc@{}}
\toprule
$m$ & $d$ & \textbf{max TV} & \textbf{Bound} & \textbf{Ratio} \\
\midrule
4 & 1 & 0.9953 & 4.712 & 0.211 \\
4 & 2 & 0.4381 & 1.571 & 0.279 \\
4 & 3 & 0.0518 & 0.393 & 0.132 \\
4 & 4 & 0.000  & 0.000 & ---   \\
5 & 1 & 0.9998 & 6.283 & 0.159 \\
5 & 2 & 0.6577 & 2.356 & 0.279 \\
5 & 3 & 0.1331 & 0.785 & 0.169 \\
5 & 4 & 0.0141 & 0.196 & 0.072 \\
5 & 5 & 0.000  & 0.000 & ---   \\
6 & 1 & 0.9999 & 7.854 & 0.127 \\
6 & 3 & 0.2280 & 1.178 & 0.194 \\
6 & 5 & 0.0038 & 0.098 & 0.039 \\
\bottomrule
\end{tabular}
\end{table}

All entries satisfy max TV $\leq$ bound,
confirming Theorem~\ref{thm:tvd}.
Maximum ratio $0.279$ shows the bound is within a constant factor of
tight (Remark~\ref{rem:tight}).

\subsection{Corrected Platform Gate Counts}

\begin{table}[h]
\centering
\caption{Corrected $G(m,\dstar)$ values using exact
  Theorem~\ref{thm:gate_count} formula ($m=30$).}
\label{tab:gates_corrected}
\renewcommand{\arraystretch}{1.25}
\small
\begin{tabular}{@{}lcccc@{}}
\toprule
\textbf{Platform} & $\varepsilon_{2q}$ & $\dstar$ &
$G(30,\dstar)$ & \textbf{Red.} \\
\midrule
IBM Eagle r3  & $3\times10^{-3}$ & 11 & 245/435 & 43.7\% \\
IBM Heron r2  & $5\times10^{-4}$ & 13 & 282/435 & 35.2\% \\
IonQ Aria     & $3\times10^{-4}$ & 14 & 299/435 & 31.3\% \\
IQM Garnet    & $2\times10^{-3}$ & 11 & 245/435 & 43.7\% \\
\bottomrule
\end{tabular}
\end{table}

\subsection{Noise-Truncation Cross-Over}

Under the RMSE model~\eqref{eq:rmse_model}, PFA-TQFT$_{\dstar}$
outperforms full QFT when
$G_{\mathrm{full}}^2\varepsilon_{2q}^2>G(\dstar)^2\varepsilon_{2q}^2+\mathrm{TV}^2/3$,
i.e.\ when $\varepsilon_{2q}>\varepsilon^\times$ where
\begin{equation}
  \varepsilon^\times
  \;=\;\frac{\mathrm{TV}/\sqrt{3}}
            {c\,\sqrt{G_{\mathrm{full}}^2-G(\dstar)^2}}.
  \label{eq:crossover}
\end{equation}
For IBM Eagle r3 ($m=16$, $\dstar=11$):
$G_{\mathrm{full}}=120$, $G(16,11)=90$,
$\mathrm{TV}\approx0.046$,
giving $\varepsilon^\times\approx5.4\times10^{-3}$,
consistent with the cross-over visible in Fig.~5(a).

\section*{Acknowledgements}

The authors thank the Department of Computer Science \& Engineering at NIT
Puducherry for computational resources and the MEITY for Visveshvaraya Fellowship -Phase -II.

\section*{Author Contributions}

\textbf{Akoramurthy~B}: Conceptualisation, Formal analysis, Methodology,
Software, Investigation, Visualisation, Writing---original draft.\\
\textbf{Surendiran~B}: Supervision, Validation, Writing---review \& editing,
Funding acquisition.

\section*{Data and Code Availability}

All  code and figure-generation scripts are released under the
\textbf{MIT License} (FOSS) at
\url{https://github.com/akortheanchor/PFA-TQFT}
.
Raw numerical data are deposited in the same repository.
This satisfies Quantum's open-source policy~\cite{QuantumOpenSource}.

\section*{License}

This work is published under a
\href{https://creativecommons.org/licenses/by/4.0/}%
{Creative Commons Attribution 4.0 International (CC BY 4.0)} license, in accordance
with Quantum's publication policy.

\section*{Competing Interests}

The authors declare no competing interests.

\bibliographystyle{unsrtnat}
\bibliography{mybibliography}

\end{document}